\newtheorem{theorem}{Theorem}
\newtheorem{corollary}{Corollary} 
\theoremstyle{definition}
\tiny\color{Blue},  
\begin{document}

\title{Building Mean Field State Transition Models Using The Generalized Linear Chain Trick and Continuous Time Markov Chain Theory}

\author{
    Hurtado, Paul J. \\
	University of Nevada, Reno \\
	ORCID: 0000-0002-8499-5986 \\
	\texttt{phurtado@unr.edu}
	\And
	Richards, Cameron\\
	University of Nevada, Reno\\ 
	ORCID: 0000-0002-1620-9998\\
}


\maketitle

\begin{abstract} The well-known Linear Chain Trick (LCT) allows modelers to derive mean field ODEs that assume gamma (Erlang) distributed passage times, by transitioning individuals sequentially through a chain of sub-states. The time spent in these states is the sum of $k$ exponentially distributed random variables, and is thus gamma (Erlang) distributed. The Generalized Linear Chain Trick (GLCT) extends this technique to the much broader phase-type family of distributions, which includes exponential, Erlang, hypoexponential, and Coxian distributions. Intuitively, phase-type distributions are the absorption time distributions for continuous time Markov chains (CTMCs). Here we review CTMCs and phase-type distributions, then illustrate how to use the GLCT to efficiently build mean field ODE models from underlying stochastic model assumptions. We generalize the Rosenzweig-MacArthur and SEIR models and show the benefits of using the GLCT to compute numerical solutions. These results highlight some practical benefits, and the intuitive nature, of using the GLCT to derive ODE models from first principles.

\end{abstract}

\keywords{Linear chain trick; gamma chain trick; phase-type distribution; Coxian distribution; Erlang distribution}

\clearpage
\tableofcontents
\clearpage

\section{Introduction} \label{sec:intro} 

Continuous time state transition models, often formulated as mean field ODE models, are widely used throughout the biological sciences and across multiple scales. Examples include models of multi-species interactions, infectious disease transmission, cell proliferation, and various other applications in which entities transition among a finite number of states \citep[e.g.,][]{AndersonMay1992,Clapp2015,Strogatz2014,Yates2017,Meiss2017,ArrowsmithPlace1990,Allen2007,EdelsteinKeshet2005,Dayan2005,Izhikevich2010,HSD2013,EllnerGuckenheimer2006,Murray2007,Murray2011,Wiggins2003,Keener2008a,Keener2008b,BeuterGlassMackeyTitcombe2003}.  One criticism of mean field ODE models is that they often implicitly assume the time individuals spend in the different states are exponentially distributed, and it is known that the timing of state transitions can very meaningfully affect model dynamics and model outputs in an applied setting \citep{Wearing2005, Robertson2018, Metz1986, Metz1991, Nisbet1989, Krylova2013,Getz2018b}. That is, an ODE model with a linear loss rate can be interpreted as implicitly assuming an underlying stochastic state transmission model with an exponentially distributed dwell-time in that focal state. For example, the simple model $dx/dt =\, b - d\,x$ is consistent with assuming an underlying stochastic state transition model in which individuals spend an exponentially distributed amount of time with mean $1/d$ in the state corresponding to variable $x$. 

One remedy to address this issue with ODE models is known as the Linear Chain Trick \citep[LCT; ][and references therein]{Smith2010,Hurtado2019}, which allows modelers to instead assume gamma (Erlang\footnote{Gamma distributions with integer-valued shape parameters are those that can be thought of as the sum of \textit{iid} exponentially distributed random variables, and are known as \textit{Erlang distributions.}}) distributed passage times (a.k.a., \textit{dwell times}). This is accomplished by partitioning a state into a series of $k$ sub-states, where individuals transition sequentially through this ``linear chain" of states. The resulting time spent in this collection of sub-states is thus the sum of $k$ exponentially distributed random variables, and therefore follows an Erlang distribution (if each exponential has the same rate) or a generalized Erlang\footnote{The sum of independent exponentially distributed random variables with different rates is known as a generalized Erlang or hypoexponential distribution.} distribution if the rates differ. 

The Generalized Linear Chain Trick (GLCT) \citep{Hurtado2019} extends this technique to allow modelers to assume these passage times follow a much broader family of distributions that includes the \textit{phase-type} family of distributions \citep{Bladt2017,Bladt2017ch3,Reinecke2012a,Horvath2016,BuTools2}. This broad family includes exponential, Erlang, hypoexponential, hyperexponential, Coxian and some other named distributions. Intuitively, phase-type distributions can be thought of as the family of all possible \textit{hitting time} (or \textit{absorption time}) distributions for continuous time Markov chains (CTMCs). In addition, statistical methods exist for estimating such distributions from data \citep[][and references therein]{Horvath2012,Horvath2016,Hurtado2019} allowing researchers to build approximate empirical distributions into ODE models using a more flexible family of distributions than only the Erlang distributions.

In this paper, we illustrate how to use the GLCT alongside concepts and techniques from CTMC theory to build and numerically solve mean field ODE models using a much richer set of possible underlying stochastic model assumptions. The paper is organized as follows. First, we review CTMCs and phase-type distributions. We then state the GLCT for phase-type distributions and, for comparison, the well-known LCT. In the Results section, we generalize some simple biological state transition models by replacing their implicit assumption of exponentially distributed passage time assumptions with arbitrary phase-type distributions. Lastly, we investigate some of the computational costs and benefits of using this generalized model framework with regards to computing numerical solutions. 

\subsection{Continuous Time Markov Chains and Phase-Type Distributions}

To provide proper context for an intuitive understanding of the Generalized Linear Chain Trick (GLCT), we briefly review continuous time Markov chains (CTMCs) with a focus on CTMCs that have a single absorbing state. Our focus will then be on the probability distributions that describe the time it takes to reach that absorbing state starting from one of the transient states, since these absorption time distributions define the phase-type family of probability distributions. The following summaries build upon similar descriptions laid out in \citet{Hurtado2019}.

\subsubsection{Continuous Time Markov Chains}

Discrete time Markov chains (DTMCs) describe the transition of an individual (or other distinct entity) among a set of $n$ states. The transition probabilities from a state $i$ to a state $j$ ($p_{ij}$ where $1\leq i,j\leq n$) are best organized using a transition probability matrix $\mathbf{P}$, where $p_{ij}$ is value in the $i^\text{th}$ row and $j^\text{th}$ column of the matrix ($P_{ij}=p_{ij}$).  For our purposes below, we will restrict our attention to Markov Chains in which the first $k=n-1$ states are \textit{transient states}, and the last $(k+1)$ state is an \textit{absorbing state}. This means the system eventually enters this last state and remains there on each subsequent time step with probability 1. 

The transition probability matrix $\mathbf{P}$ can be written in block form according to these first $k=n-1$ transient states (we'll call this set of states X) and the last absorbing state as  \begin{equation}
\mathbf{P} = \begin{bmatrix} \mathbf{P_X} & \mathbf{P_a} \\ \mathbf{0} & 1 \end{bmatrix}
\end{equation} where $\mathbf{P_X}$ is a $k\times k$ matrix describing transition probabilities among transient states, $\mathbf{P_a}$ is the $k\times 1$ vector of probabilities of transitioning from the $i^\text{th}$ transient state to the absorbing state, and $\mathbf{0}$ is a $1\times k$ vector of zeros. 

In a continuous time Markov chain (CTMC), these transitions don't occur according to a fixed time step, but instead each transition occurs after an exponentially distributed amount of time. If the individual is in state $i$, that exponential distribution has rate $\lambda_i$ or equivalently has mean duration $1/\lambda_i$. Let  $\mathbf{R}$ be the vector of the rates $\lambda_i$, for $i=1,\ldots,k$.  Due to the memorylessness property of exponential distributions transitions from state $i$ to state $i$ can effectively be ignored, thus we can formulate a new transition probability matrix that describes an equivalent Markov chain but where we only track transitions to new states. This reformulated Markov chain is known as the \textit{embedded jump process} (or \textit{embedded Markov chain}), and it is described with a transition probability matrix $\mathbf{\widetilde P}$ where the diagonal entries are 0 and the off-diagonal transition probabilities $\widetilde p_{ij} = p_{ij}/\sum_{j\neq i}p_{ij}$. That is, these are just the off-diagonal entries of $\mathbf{P}$ with the diagonal set to 0, and the rows normalized so each row sums to 1.  For our purposes, since a Markov chain with an absorbing state is not ergodic and therefore does not properly have an embedded jump process representation, the above procedure is only applied to $k$ rows corresponding to transient states. Thus, the last row of $\mathbf{P}$ and $\mathbf{\widetilde P}$ will be the same, so that the last state in the Markov chain remains an absorbing state. 

In a CTMC context, the transition probability matrix $\mathbf{\widetilde P}$ and rate vector $\mathbf{R}$ are combined into a single matrix called the \textit{transition rate matrix} $\mathbf{Q}$. The entries of $\mathbf{Q}$ can be thought of as the mean-field, per-individual loss rates from each state (along the diagonal) and the transition rates from state $i$ into state $j$ (the off diagonal entries; see below). It has the block form \begin{equation}
\mathbf{Q} = \begin{bmatrix} \mathbf{A} & \mathbf{B} \\ \mathbf{0} & 0 \end{bmatrix}
\end{equation}

where $\mathbf{A}$ is a $k\times k$ matrix describing the transition rates among transient states, $\mathbf{B}$ is the $k\times 1$ vector of transition rates from the transient states to the absorbing state, and the bottom row is all zeros for the absorbing state. 

Matrices $\mathbf{A}$ and $\mathbf{B}$ are constructed from the entries of the transition probability matrix $\mathbf{\widetilde P}$ and the vector of rates for the exponential distributed dwell-times $\mathbf{R}$ as follows. The $i^\text{th}$ diagonal entry of $\mathbf{A}$ is the loss rate $-\lambda_i$ and the rest of the entries in the $i^\text{th}$ row are the product of $\lambda_i$ and the transition probability $\widetilde p_{ij}$. That is, the $ij$ off-diagonal entries of $\mathbf{A}$ are the per-individual transition rates from the $i^\text{th}$ state into the $j^\text{th}$ state, given by $\lambda_i\,\widetilde p_{ij}$. Since the rows of $\mathbf{\widetilde P}$ sum to 1, the rows of $\mathbf{Q}$ sum to 0, and it then follows that vector $\mathbf{B}$ is equal to the negative of the row sums of $\mathbf{A}$. Thus, we can write $\mathbf{B} = -\mathbf{A}\,\mathbf{1}$, where $\mathbf{1}$ is a column vector of $k$ ones. 

Finally, assume that the initial state of such a CTMC is one of the $k$ transient states. Let $\boldsymbol{\alpha}$ be the length $k$ column vector of probabilities (i.e., $\sum_{i=1}^k \alpha_i = 1$) that define the initial state distribution over these $k$ states.

Note that, for CTMCs which have $k$ transient states and $1$ absorbing state, all of the information necessary to describe the CTMC is contained in the transition rate matrix for transient states, $\mathbf{A}$, and initial distribution vector $\boldsymbol{\alpha}$. As detailed next, these quantities are also sufficient to parameterize the corresponding phase-type distribution.

\subsubsection{Phase-Type Distributions}

With the above family of CTMCs in mind, let $T_i$ be defined as the duration of time that it takes to first reach the absorbing state, given the CTMC starts in the $i^\text{th}$ transient state. We call this an \textit{absorption time}. More generally, let $T$ be the absorption time given that the initial state is determined by the initial probability vector $\boldsymbol{\alpha}$ (i.e., $T$ follows the mixture distribution of random variables $T_i$ with mixing probabilities $\boldsymbol{\alpha}$). Phase-type distributions are the family of absorption time distributions for all such $T$ described above.

The most familiar examples are the exponential distribution, and the Erlang distribution (i.e., those gamma distributions that have an integer shape parameter $k\geq1$) which can be thought of as the sum of $k$ independent exponentially distributed random variables, each with rate $r$. Erlang distributions can be parameterized in terms of their mean $\mu$ and coefficient of variation $c_v$ (the standard deviation over the mean), or their rate $r$ and shape $k$, where \begin{equation} 
\mu=\frac{k}r,\quad \sigma^2 = \frac{k}{r^2}, \quad c_v = \frac{1}{\sqrt{k}}, \quad \text{and thus,} \quad r=\frac{\mu}{\sigma^2},\quad \text{and} \quad k=\frac{\mu^2}{\sigma^2}=\frac{1}{c_v^{\;2}}.
\end{equation}

More generally, phase-type distributions are parameterized by vector $\boldsymbol{\alpha}$ and transient state rate matrix $\mathbf{A}$ (as defined in the previous section), and have the probability density function, cumulative distribution function, and $j^\text{th}$ moment given (respectively) by:  \begin{subequations} \begin{align}
	f(t) =&\; \boldsymbol{\alpha}\,e^{\mathbf{A}t}\,(-\mathbf{A}\mathbf{1}) \\
	F(t) =&\; 1 - \boldsymbol{\alpha}\,e^{\mathbf{A}t}\,\mathbf{1} \\
	E(T^j)=&\; j!\,\boldsymbol{\alpha}\,(-\mathbf{A})^{-j}\mathbf{1}. 
\end{align} \end{subequations}

Here $\mathbf{1}$ is a column vector of ones that has the same number of rows as $\boldsymbol{\alpha}$ and $\mathbf{A}$. Note that $\boldsymbol{\alpha}$ and $\mathbf{A}$ are not a unique parameterization of a given phase-type distribution, and there are equivalent parameterizations using vector-matrix pairs of the same dimension as well as different dimensions. Phase-type distributions can be classified as cyclic (transient states can be visited infintely often) and acyclic (transient states can only be visited once). This family of distributions has been relatively well studied in the queuing theory literature, and elsewhere, and readers are encouraged to consult \citet{Bladt2017, Bladt2017ch3, Reinecke2012a, Reinecke2012b, Horvath2012, Horvath2016, Altiok1985} for further details.

Additionally, freely available computational tools such as BuTools for Matlab and Python \citep{BuTools2,BuToolswww} enable researchers to fit phase-type distributions to data. This fact, combined with the Generalized Linear Chain Trick, allows for the construction of ODE models that incorporate empirically derived distributional assumptions for the time spent in a given state.

\subsection{Generalized Linear Chain Trick}

The GLCT provides modelers with a direct way to take an existing ODE model that includes a state that has an exponentially distributed dwell time, and obtain a new set of ODEs where that exponentially distributed dwell time has been replaced with a phase-type dwell time distribution. This is done by partitioning that focal state into a set of sub-states and using the GLCT to write the new systems of ODEs that govern those sub-states using the matrix and vector parameterization of the assumed phase-type distribution. This technique can also be used to implement the classic Linear Chain Trick (LCT), since Erlang distributions (i.e., gamma distributions with integer shape parameters) are a subfamily of phase-type distributions. 

The GLCT in its most general form \citep{Hurtado2019} extends the GLCT for phase-type distributions to the scenario where the rates and probabilities in the CTMC framework described above can vary with time. Here, we only provide a statement of the GLCT for phase-type distributions:

\begin{theorem}[{GLCT for phase-type distributions [Corollary 2 in \cite{Hurtado2019}]}]
	Assume individuals enter state X at rate $\mathcal{I}(t)$ and that the distribution of time spent in state X follows a continuous phase-type distribution given by the length $k$ initial probability vector $\alpha$ and the $k\times k$ matrix $A$. The mean field equations for these transient sub-states $x_i$ are given by
	\begin{equation} \frac{d}{dt}\mathbf{x}(t)=\boldsymbol\alpha\,\mathcal{I}(t) + \mathbf{A}^\text{T}\,\mathbf{x}(t) \label{eq:GLCT}\end{equation}
	where the rate of individuals leaving each of these sub-states of X is given by the vector $(-\mathbf{A\,1})\circ\mathbf{x}$, where $\circ$ is the Hadamard (element-wise) product of the two vectors, and thus the total rate of individuals leaving state X is given by the sum of those terms, i.e., $(-\mathbf{A\,1})^\text{T}\mathbf{x}=-\mathbf{1}^\text{T}\mathbf{A}^\text{T}\mathbf{x}$.
\end{theorem}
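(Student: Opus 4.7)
The plan is to derive equation \eqref{eq:GLCT} directly from the CTMC bookkeeping set up in the preceding sections, by interpreting each sub-state $x_i(t)$ as the expected population currently occupying the $i^\text{th}$ transient state of the CTMC that defines the assumed phase-type dwell-time distribution. Since the phase-type family is defined precisely as absorption times of such CTMCs with initial vector $\boldsymbol\alpha$ and transient rate matrix $\mathbf{A}$, it suffices to write the mean-field balance equations for the transient occupancies and identify each term in \eqref{eq:GLCT} with an inflow, an internal rearrangement, or an outflow to the absorbing state.

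First I would handle inflow. Each individual entering X at rate $\mathcal{I}(t)$ is assigned to sub-state $i$ with probability $\alpha_i$, so the expected per-unit-time arrival into $x_i$ is $\alpha_i\,\mathcal{I}(t)$; stacking these gives the vector term $\boldsymbol\alpha\,\mathcal{I}(t)$. Next I would handle internal transitions. From the construction of $\mathbf{A}$ reviewed earlier, an individual in sub-state $i$ leaves at rate $\lambda_i=-A_{ii}$, and of those departures a fraction $\widetilde p_{ij}=A_{ij}/\lambda_i$ (for $j\neq i$) routes to sub-state $j$, while a fraction $B_i/\lambda_i$ routes out of X to the absorbing state, where $\mathbf{B}=-\mathbf{A}\mathbf{1}$. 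The per-individual flux from $i$ into $j\neq i$ is therefore $A_{ij}$, and the total per-individual loss from $i$ is $\lambda_i=-A_{ii}$.

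Combining these flows, the rate of change of the expected occupancy $x_j$ due to internal dynamics and absorption is the gain from every other sub-state minus the total loss from $j$:
\begin{equation*}
\sum_{i\neq j} A_{ij}\,x_i \;+\; A_{jj}\,x_j \;=\; \sum_i A_{ij}\,x_i \;=\; (\mathbf{A}^\text{T}\mathbf{x})_j.
\end{equation*}
Adding the inflow $\alpha_j\,\mathcal{I}(t)$ and assembling across $j=1,\ldots,k$ yields exactly \eqref{eq:GLCT}. The per-sub-state outflow from X to the absorbing state is $B_i\,x_i=(-\mathbf{A}\mathbf{1})_i\,x_i$, giving the Hadamard-product form $(-\mathbf{A}\mathbf{1})\circ\mathbf{x}$, with total leakage $(-\mathbf{A}\mathbf{1})^\text{T}\mathbf{x}=-\mathbf{1}^\text{T}\mathbf{A}^\text{T}\mathbf{x}$, as claimed.

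The main obstacle is not the algebra but the justification that this deterministic ODE faithfully represents the stochastic model rather than being merely suggestive. That step rests on the linearity of the CTMC: because per-individual transition rates do not depend on occupancy, the Kolmogorov forward equations for a single individual's state probabilities lift to the expected sub-population vector without any closure approximation, and the time an individual spends in X before absorption is by construction the phase-type variable with parameters $(\boldsymbol\alpha,\mathbf{A})$. I would note that this is a time-homogeneous special case of the more general theorem in \citet{Hurtado2019}, which extends the same argument to time-varying rates and transition probabilities.
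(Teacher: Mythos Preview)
Your derivation is correct, but there is nothing in the paper to compare it against: the paper does not prove this theorem. It merely states it as an imported result (labeled ``Corollary 2 in \cite{Hurtado2019}'') and then moves directly to the LCT corollary. What the paper \emph{does} contain is the preliminary CTMC bookkeeping in \S1.1.1---the construction of $\mathbf{A}$ from $\boldsymbol{\lambda}$ and $\mathbf{\widetilde P}$, and the identity $\mathbf{B}=-\mathbf{A}\mathbf{1}$---and your argument is precisely the mean-field balance computation that those preliminaries are set up to support. In that sense your proposal is the proof the paper gestures at but defers to the earlier reference; your closing remark that this is the time-homogeneous special case of the more general theorem in \cite{Hurtado2019} is exactly how the paper frames it as well.
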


Note that the influx of individuals at time $t$ (at rate $\mathcal{I}(t)$) is distributed across the sub-states of X according to the initial distribution vector $\boldsymbol{\alpha}$, and the second term in eq. \eqref{eq:GLCT} describes both the movements among sub-states of X as well as the loss rate from the state X from each sub-state.

The Linear Chain Trick (LCT) has been known for decades, and is special case of the GLCT for phase-type distributions stated above \citep{Hurtado2019}. Here we give a formal statement of the LCT, which assumes an Erlang distributed dwell time, with shape parameter $k$ and rate parameter $r$. 

\begin{corollary}[\textbf{Linear Chain Trick}] \label{th:lct}
	~\\
	Consider the GLCT above. Assume that the dwell-time distribution is an Erlang distribution with shape $k$ and rate $r$ (or if written in terms of shape $k$ and mean $\tau=k/r$, then use rate $r=k/\tau$). Then the corresponding mean field ODE equations for the $k$ sub-states of X are 
	\begin{equation} \label{eq:LCT}\begin{split} 
	\frac{dx_1}{dt} =&\; \mathcal{I}(t) - r\,x_1 \\
	\frac{dx_2}{dt} =&\;  r\,x_1 - r\,x_2 \\
	&\vdots \\
	\frac{dx_k}{dt} =&\; r\,x_{k-1} - r\,x_{k}.
	\end{split}\end{equation} where the total loss rate from state X at time $t$ is the loss rate from the final sub-state, $r\,x_k(t)$.
\end{corollary}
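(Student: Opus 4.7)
The plan is to verify that an Erlang$(k,r)$ dwell time is a phase-type distribution with a specific, easily described $(\boldsymbol{\alpha},\mathbf{A})$ pair, then substitute this parameterization into equation \eqref{eq:GLCT} and read off the component equations. The bulk of the work is bookkeeping; no new ideas beyond the GLCT are needed.

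\textbf{Step 1: Identify the phase-type parameterization.} Since an Erlang$(k,r)$ random variable equals the sum of $k$ \emph{iid} Exp$(r)$ random variables, the corresponding CTMC has $k$ transient sub-states visited sequentially with exponential holding times of common rate $r$, followed by a single absorption. All mass enters at sub-state $1$, so $\boldsymbol{\alpha}=(1,0,\dots,0)^{\mathrm T}$. Using the construction from the paper, where $A_{ii}=-\lambda_i$ and $A_{ij}=\lambda_i\widetilde p_{ij}$, we have $\lambda_i=r$ and $\widetilde p_{i,i+1}=1$ for $i<k$, which yields the bidiagonal matrix
\begin{equation*}
\mathbf{A} \;=\; \begin{pmatrix} -r & r & 0 & \cdots & 0 \\ 0 & -r & r & \cdots & 0 \\ \vdots & & \ddots & \ddots & \vdots \\ 0 & \cdots & 0 & -r & r \\ 0 & \cdots & 0 & 0 & -r \end{pmatrix}.
\end{equation*}

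\textbf{Step 2: Substitute into the GLCT.} The transpose $\mathbf{A}^{\mathrm T}$ is lower bidiagonal with $-r$ on the diagonal and $r$ on the subdiagonal. Computing the right-hand side of \eqref{eq:GLCT}, the input term $\boldsymbol{\alpha}\,\mathcal{I}(t)$ contributes $\mathcal{I}(t)$ only to the equation for $x_1$, while $\mathbf{A}^{\mathrm T}\mathbf{x}$ contributes $-r x_1$ to the first equation and $r\,x_{i-1}-r\,x_i$ to the $i$-th equation for $2\leq i\leq k$. Adding these contributions reproduces the system \eqref{eq:LCT} exactly.

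\textbf{Step 3: Verify the loss rate.} The row sums of $\mathbf{A}$ are $0$ for rows $1,\dots,k-1$ and $-r$ for row $k$, so $-\mathbf{A}\mathbf{1}=(0,\dots,0,r)^{\mathrm T}$. Therefore the per-sub-state outflow vector $(-\mathbf{A}\mathbf{1})\circ\mathbf{x}$ has only one nonzero entry, $r\,x_k$, and the total rate of individuals leaving state X is $r\,x_k(t)$, as claimed.

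The only place that requires care is the index convention: one must use $\mathbf{A}^{\mathrm T}$ (not $\mathbf{A}$) when writing the ODEs, so that a ``flow from sub-state $i$ into sub-state $i{+}1$" encoded as the superdiagonal entry $A_{i,i+1}=r$ shows up correctly as a $+r\,x_{i-1}$ production term in the equation for $x_i$. Once this is handled cleanly, the corollary follows by direct inspection.
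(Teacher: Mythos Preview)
Your proof is correct and follows essentially the same approach as the paper: you exhibit the phase-type parameterization $(\boldsymbol{\alpha},\mathbf{A})$ of the Erlang$(k,r)$ distribution and substitute it into eq.~\eqref{eq:GLCT}. The paper's proof is terser (it simply displays the vector and matrix and asserts the substitution yields the result), whereas you helpfully spell out the component equations and the loss-rate computation, but the argument is the same.
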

\begin{proof}
	The phase-type distribution formulation of the Erlang distribution with shape $k$ and rate $r$ is given by \textbf{v} and \textbf{M} below, and substituting these into eq. \eqref{eq:GLCT} which yields the desired result. \begin{equation} \label{eq:vM} \textnormal{\textbf{v}} = \begin{bmatrix} 1 \\ 0 \\ \vdots \\ 0  \end{bmatrix}  \qquad \text{and} \qquad \mathbf{M} = \begin{bmatrix} -r & r & 0 & \cdots & 0 \\
	0 & -r & r  & \ddots & 0 \\
	\vdots & \ddots & \ddots  & \ddots & \ddots \\
	0 & 0 & \ddots &  -r & r \\
	0 & 0 & \cdots &  0 & -r
	\end{bmatrix} \end{equation}  See \citet{Hurtado2019} for a direct proof that uses a recursive relationship between Erlang density functions and their derivatives.
\end{proof}

\section{Results}\label{sec:results}

In the sections below, we extend two well-known models using the GLCT by replacing the implicit exponentially distributed dwell time assumptions of these models with phase-type distribution assumptions. These more general model formulations can also be used as a way to formulate models that could otherwise be derived using the standard LCT (i.e., the assumption of Erlang distributed dwell times). This may be the more desirable approach since the phase-type formulation of such models can be more practically and computationally advantageous to work with, which we show in section \ref{sec:benchmark}.

\subsection{Rosenzweig-MacArthur Predator-Prey Model}

Maturation delays in population models can influence model outputs, although such delays are not always incorporated into models used in applications \citep{Robertson2018}. In this section, we illustrate how one can use the GLCT to incorporate phase-type maturation times into such population models, using the widely used Rosenzweig-MacArthur model of predator-prey (consumer-resource) dynamics \citep{Murdoch2003,Rosenzweig1963}: \begin{subequations} \label{eq:RM} \begin{align} 
	\frac{dN}{dt} =&\; r\,N\bigg(1-\frac{N}{K}\bigg)-\frac{a\,P}{k+N}N\\
	\frac{dP}{dt} =&\; \chi\frac{a\,N}{k+N}P - \mu\,P  
\end{align} \end{subequations} 

In the absence of predators ($P$), the prey population ($N$) is subject to logistic growth, and predators consume prey following a Holling's type II functional response \citep{Murdoch2003, Dawes2013, Holling1959, Holling1959a}. Predators will then live for an exponentially distributed lifetime with mean $1/\mu$. 

One approach to incorporating a maturation delay of duration $\tau$ is to formulate a delay differential equation (DDE), as in \citep{Xia2009}: \begin{subequations} \label{eq:RMDDE} \begin{align} 
	\frac{dN(t)}{dt} =&\; r\,N(t)\bigg(1-\frac{N(t)}{K}\bigg)-\frac{a\,P(t)}{k+N(t)}N(t)\\
	\frac{dP(t)}{dt} =&\; \chi\frac{a\,N(t-\tau)}{k+N(t-\tau)}P(t-\tau) - \mu\,P(t)  
\end{align} \end{subequations} 

This can be thought of as the limit of a distributed delay model, with mean delay time $\tau$, for which the variance or coefficient of variation goes to zero. This corresponds to a delay distribution with point mass at $\tau$ (i.e., the distribution can be described with a Dirac delta function). The LCT has long been used to approximate such limiting cases in DDE models by assuming instead a delay distribution that is Erlang distributed with mean $\tau$ and a very small coefficient of variation, i.e., a large shape parameter $k=1/c_v^{\;2}$ \citep{Smith2010, Hurtado2019}. Writing this approximating model, as in \citet{Hurtado2020}, yields the Rosenzweig\textendash{}MacArthur model with Erlang distributed maturation time in the predators: \begin{subequations} \label{eq:RMLCT} \begin{align} 
	\frac{dN}{dt} =&\;  r\,N\bigg(1-\frac{N}{K}\bigg)-\frac{a\,P}{k+N}N\\
	\frac{dx_1}{dt}=&\; \chi\frac{a\,N}{k+N}P - \frac{k}{\tau}\,x_1 \\
	\frac{dx_j}{dt}=&\; \frac{k}{\tau}\,x_{j-1} - \frac{k}{\tau}\,x_j, \quad \text{for } j=2,\ldots,k\\
	\frac{dP}{dt} =&\;  \frac{k}{\tau}\,x_k - \mu\,P  
\end{align} \end{subequations}  

The sub-states $x_j$, $j=1,\ldots,k$, track the immature stages of the predators before they mature. 

Using the GLCT, the above model can be generalized in two ways. First, the Erlang distributed maturation time assumption that yields the sub-states $x_i$ can be replaced by the assumption of a more general phase-type distribution with matrix-vector parameterization $\boldsymbol\alpha_\mathbf{X}$ and $\mathbf{A_X}$. Similarly, the exponentially distributed time duration that predators spend as adults can also be replaced with a more general phase-type distribution with parameter vector $\boldsymbol\alpha_\mathbf{P}$ and matrix $\mathbf{A_P}$.  According to the GLCT \textendash{} where $\mathbf{x}(t)$ denotes the vector of maturing predator sub-states $x_i(t)$, $\mathbf{y}(t)$ is the vector of adult predator sub-states $y_j(t)$, and where $P(t)=\sum_{\text{all }j}y_j(t)$ \textendash{} these assumptions yield the more general model: \begin{subequations} \label{eq:RMPT} \begin{align} 
	\frac{dN}{dt} =&\; r\,N\bigg(1-\frac{N}{K}\bigg)-\frac{a\,P}{k+N}N\\
	\frac{d\mathbf{x}}{dt} =&\; \chi\frac{a\,N}{k+N}P\;\boldsymbol{\alpha}_\mathbf{X} + \mathbf{A_X}^\text{T} \mathbf{x}  \\
	\frac{d\mathbf{y}}{dt} =&\; \underbrace{-\mathbf{1}^\text{T}\mathbf{A_X}^\text{T}\mathbf{x}}_\text{scalar}\;\boldsymbol{\alpha}_\mathbf{P} + \mathbf{A_P}^\text{T}\mathbf{y}  
\end{align} \end{subequations} 

Observe that eqs. \eqref{eq:RMLCT} are the special case of eqs. \eqref{eq:RMPT} where the phase-type distribution matrix-vector parameterization for an Erlang distribution with mean $\tau$ and shape $k$ is given by \begin{equation}\boldsymbol{\alpha}_\mathbf{X} = \begin{bmatrix} 1 \\ 0 \\ \vdots \\ 0  \end{bmatrix} \quad \text{and} \quad  \mathbf{A_X} = \begin{bmatrix} -\frac{k}{\tau} & \frac{k}{\tau} & 0 & 0 & \cdots & 0 \\
0 & -\frac{k}{\tau} & \frac{k}{\tau} & 0 & \cdots & 0 \\
& & & \ddots & & \\
0 & 0 & 0 & \cdots & -\frac{k}{\tau}  & \frac{k}{\tau} \\
0 & 0 & 0 & \cdots & 0 & -\frac{k}{\tau} 
\end{bmatrix} \end{equation}

and for an exponential distribution with rate $\mu$, \begin{equation}\boldsymbol{\alpha}_\mathbf{P} = \begin{bmatrix} 1 \end{bmatrix} \quad \text{and} \quad  \mathbf{A_P} = \begin{bmatrix} -\mu \\
\end{bmatrix}. \end{equation}

Note that eqs. \eqref{eq:RMLCT} are a much more compact way of formulating such generalized models without the need to specify the number of sub-states. As shown below, this formulation allows modelers to write more efficient computer code for computing numerical solutions to such models, and also can be used with computer algebra systems to generate ODEs like eqs. \eqref{eq:RMLCT} from first principles.


\subsection{SEIR Model} \label{sec:SEIR}

SIR-type models of infectious disease transmission are widely used in the study of infectious diseases, and can help inform public health efforts to limit the spread of infectious diseases \citep{Kermack1927,Kermack1932,Kermack1933,Kermack1991a,Kermack1991b,Kermack1991c,Keeling1997,AndersonMay1992,Diekmann2000,Lloyd2009,Wearing2005}. For example, such models are currently being used in response to the ongoing SARS-CoV-2/COVID-19 pandemic. 

It is known that including a latent period prior to the onset of symptoms and infectiousness, as well as incorporating non-exponential distributions for the time spent in the different infection states, can be important to include in models that are being used in such applications \citep{Wearing2005,Feng2007,Wang2017}. 

Here we use the GLCT to formulate a more general SEIR model where we assume that the latent period (time spent in state E) and infectious period (time spent in state I) follow independent phase-type distributions. For simplicity, we assume the state variables have been scaled by the total population size so that $S+E+I+R=1$, and that there are no births or deaths in the model. 

To begin, consider this simple SEIR model, where $S$ is the fraction of susceptibles in the population, $E$ the fraction of exposed individuals with latent infections, $I$ the fraction of individuals with active infections, and $R$ the fraction of recovered or removed individuals: \begin{subequations}  \label{eq:SEIR}  \begin{align}
	\frac{dS}{dt} =& \; -\beta\,S\,I \label{eq:SEIRa} \\
	\frac{dE}{dt} =& \; \beta\,S\,I - r_E\,E \label{eq:SEIRb}\\
	\frac{dI}{dt} =& \; r_E\,E - r_I\,I \label{eq:SEIRc}\\
	\frac{dR}{dt} =& \; r_I\,I \label{eq:SEIRd} 
	\end{align}
\end{subequations}

Next, assume the latent period distribution is phase-type with parameters $\boldsymbol{\alpha}_\mathbf{E}$ and $\mathbf{A_E}$, and the infectious period distribution is also phase-type, but with parameters $\boldsymbol{\alpha}_\mathbf{I}$ and $\mathbf{A_I}$. Let $\mathbf{x}=[E_1,\ldots]^\text{T}$ and $\mathbf{y}=[I_1,\ldots]^\text{T}$ be the column vectors of the fraction of individuals in each of the exposed and infectious sub-states, respectively, where $E=\sum E_j$ and $I=\sum I_i$. Then by the GLCT we can write the mean field ODEs for the generalized SEIR model as follows: \begin{subequations} \label{eq:SEIRPT} \begin{align}  
\frac{dS}{dt} =& \; -\beta\,S\,I  \\
\frac{d\mathbf{x}}{dt} =& \; \boldsymbol{\alpha}_\mathbf{E}\,\beta\,S\,I + {\mathbf{A_E}}^\textsf{T}\mathbf{x} \\
\frac{d\mathbf{y}}{dt} =& \; \boldsymbol{\alpha}_\mathbf{I}\underbrace{\big((-{\mathbf{A_E}\mathbf{1}})^\textsf{T}\mathbf{x}\big)}_\text{scalars\qquad} + {\mathbf{A_I}}^\textsf{T}\mathbf{y} \\
\frac{dR}{dt} =& \; \overbrace{(-{\mathbf{A_I}\mathbf{1}})^\textsf{T}\mathbf{y}}
\end{align} \end{subequations} 
 
To assume, for example, an Erlang distributed latent period with mean $\tau_E$ and shape $k_E$, i.e., rate $1/\tau_E$ and coefficient of variation $c_v=1/\sqrt{k_E}$, then one would use \begin{equation}\boldsymbol{\alpha}_\mathbf{E} = \begin{bmatrix} 1 \\ 0 \\ \vdots \\ 0  \end{bmatrix} \quad \text{and} \quad  \mathbf{A_E} = \begin{bmatrix} -\frac{k_E}{\tau_E} & \frac{k_E}{\tau_E} & 0 & 0 & \cdots & 0 \\
0 & -\frac{k}{\tau_E} & \frac{k_E}{\tau_E} & 0 & \cdots & 0 \\
& & & \ddots & & \\
0 & 0 & 0 & \cdots & -\frac{k_E}{\tau_E}  & \frac{k_E}{\tau_E} \\
0 & 0 & 0 & \cdots & 0 & -\frac{k_E}{\tau_E} 
\end{bmatrix} \end{equation}

Similarly, an Erlang infectious period distribution with mean $\tau_I$ and shape parameter $k_I$ (coefficient of variation $1/\sqrt{k_I}$) would be parameterized by \begin{equation}\boldsymbol{\alpha}_\mathbf{I} = \begin{bmatrix} 1 \\ 0 \\ \vdots \\ 0  \end{bmatrix} \quad \text{and} \quad  \mathbf{A_I} = \begin{bmatrix} -\frac{k_I}{\tau_I} & \frac{k_I}{\tau_I} & 0 & 0 & \cdots & 0 \\
0 & -\frac{k}{\tau_I} & \frac{k_I}{\tau_I} & 0 & \cdots & 0 \\
& & & \ddots & & \\
0 & 0 & 0 & \cdots & -\frac{k_I}{\tau_I}  & \frac{k_I}{\tau_I} \\
0 & 0 & 0 & \cdots & 0 & -\frac{k_I}{\tau_I} 
\end{bmatrix} \end{equation}

Simplifying the right hand side of eqs. \eqref{eq:SEIRPT} using the matrix and vector definitions above yields the familiar sub-state equations for an SEIR model with Erlang distributed latent and infectious periods, eqs. \eqref{eq:SEIRLCT}. 

\begin{subequations} \label{eq:SEIRLCT} \begin{align}
\frac{dS}{dt} =& \; -\beta\,S\,I  \\
\frac{d\mathbf{x}}{dt} =&\; \begin{bmatrix} \beta\,SI -\frac{k_E}{\tau_E}E_1 \\ \frac{k_E}{\tau_E}E_1 - \frac{k_E}{\tau_E}E_2 \\ \frac{k_E}{\tau_E}E_2 - \frac{k_E}{\tau_E}E_3 \\ \vdots \\ \frac{k_E}{\tau_E}E_{k_E-1} - \frac{k_E}{\tau_E}E_{k_E} \end{bmatrix} \\
\frac{d\mathbf{y}}{dt} =&\; \begin{bmatrix} \frac{k_E}{\tau_E}E_{k_E} -\frac{k_I}{\tau_I}I_1 \\ \frac{k_I}{\tau_I}I_1 - \frac{k_I}{\tau_I}I_2 \\ \frac{k_I}{\tau_I}I_2 - \frac{k_I}{\tau_I}I_3 \\ \vdots \\ \frac{k_I}{\tau_I}I_{k_I-1} - \frac{k_I}{\tau_I}I_{k_I} \end{bmatrix} \\
\frac{dR}{dt} =& \; \frac{k_I}{\tau_I}I_{k_I} \end{align}
\end{subequations} where $\mathbf{x} = [E_1,\;\ldots,E_{k_E}]^\text{T}$, $\mathbf{y} = [I_1,\;\ldots,I_{k_I}]^\text{T}$, and $I=\sum I_i$.

Other phase-type distributions could be assumed, e.g., by fitting non-Erlang phase-type distributions to data using computational tools like the free software \texttt{BuTools} \citep{BuTools2,BuToolswww}.

\subsection{SEIR Model with Heterogeneity Among Infected Individuals}

The examples above illustrate how an existing DDE or ODE model can be generalized by assuming that states with fixed or exponentially distributed dwell times instead have phase-type distributed dwell times. Here we take the generalized SEIR model above and use the GLCT to further explore more complex model assumptions. We do this by considering two special cases of this generalized model (see Figs. \ref{fig:SEIRH1}, \ref{fig:SEIRH2}): one that models hospitalization in a manner that does not change the distribution of time in the infected class, and a second case that models heterogeneity in the severity and duration of disease. In each case, we assume that infected individuals will either experience mild or severe disease with the potential for distributional differences in the duration of infection.


For simplicity, here we assume the removed class $R$ contains both recovered and deceased individuals, and that, upon transitioning from the class of individuals with latent infection (E) to the infectious class (I), a fraction $\rho$ will go on to develop severe symptoms (in state I$_s$) that may require hospitalization, whereas the remaining fraction ($1-\rho$) do not develop severe disease and instead enter a different state of more mild disease (I$_0$). Using the GLCT, the states I$_0$ and I$_s$ are partitioned into sub-states, where the numbers in each are described by vectors $\mathbf{y_{0}}$ and $\mathbf{y_{s}}$, respectively, and their dynamics obey the following system of ODEs: \begin{subequations} \label{eq:SEIRH} \begin{align}  
	\frac{dS}{dt} =& \; -\beta\,S\,I  \\
	\frac{d\mathbf{x}}{dt} =& \; \boldsymbol{\alpha}_\mathbf{E}\,\beta\,S\,I + {\mathbf{A_E}}^\textsf{T}\mathbf{x} \\
	\frac{d\mathbf{y_0}}{dt} =& \; \boldsymbol{\alpha}_\mathbf{I_0}\big((1-\rho)(-{\mathbf{A_E}\mathbf{1}})^\textsf{T}\mathbf{x}\big) + {\mathbf{A_{I_0}}}^\textsf{T}\mathbf{y_0} \\
	\frac{d\mathbf{y_s}}{dt} =& \; \boldsymbol{\alpha}_\mathbf{I_s}\big(\rho\,(-{\mathbf{A_E}\mathbf{1}})^\textsf{T}\mathbf{x}\big) + {\mathbf{A_{I_s}}}^\textsf{T}\mathbf{y_s} \\
	\frac{dR}{dt} =& \; (-{\mathbf{A_{I_0}}\mathbf{1}})^\textsf{T}\mathbf{y_0} +  (-{\mathbf{A_{I_s}}\mathbf{1}})^\textsf{T}\mathbf{y_s}
	\end{align} \end{subequations} 

Eqs. \eqref{eq:SEIRH} can be viewed as a special case of eqs. \eqref{eq:SEIRPT} defined in terms of a mixture of two phase-type distributions, where $\mathbf{y}=[\mathbf{y_0}^\text{T},\mathbf{y_s}^\text{T}]$, the vector $\boldsymbol{\alpha_I}=[(1-\rho)\boldsymbol{\alpha_{I_0}}^\text{T},\,\rho\boldsymbol{\alpha_{I_s}}^\text{T}]$, and $\mathbf{A_I}$ is the block diagonal matrix $\mathbf{A_I}=$diag($\mathbf{A_{I_0}},\mathbf{A_{I_s}}$). The two cases below are described in the context of eqs. \eqref{eq:SEIRH}.

\subsubsection{Case 1: Hospitalization Independent of Progress Towards Infection Resolution}

In this scenario, individuals progress towards recovery or death according to an Erlang distribution with rate $\lambda$ and shape $k_{y_0}$. Independently, they also move towards hospitalization according to an Erlang distributed hospitalization time with rate $r$ and shape $k_H$ \citep[for an example of this structure used to model Ebola, but where $k_H=1$, see][]{Feng2016}. Modeling this with Erlang latent and infectious period distributions for simplicity, and according to the competing Poisson processes motif detailed in \citep{Hurtado2019}, yields a sub-state structure as shown in Fig. \ref{fig:SEIRH1}.   The matrix-vector pairs $\boldsymbol{\alpha}_\mathbf{E}$, $\mathbf{A_E}$, and $\boldsymbol{\alpha}_\mathbf{I_0}$, $\mathbf{A_{I_0}}$ are as described above for Erlang distributions. 

The matrix-vector pair $\mathbf{A_{I_s}}$ and $\boldsymbol{\alpha}_\mathbf{I_s}$ are defined as follows\footnote{Compare to the matrix-vector parameterization of the minimum of two Erlang distributions (the minimum of two phase-type distributions is itself phase-type) using the formulas given in \citet{Bladt2017ch3}.}. If we order these states starting at the $I_{11}$ entry and work across each rows left to right before moving down to the next row (see Fig. \ref{fig:SEIRH1}), then the associated rate matrix has the following block form with each column and row having $k_H$ blocks of dimension $k_1 \times k_1$. This block structure corresponds to each row of the I$_1$ sub-states shown in the lower portion of Fig. \ref{fig:SEIRH1} as a $k_H\times k_1$ grid of sub-states. \begin{equation}
\mathbf{A_{I_s}}=\left[  \begin{array}{cccccc} 
\mathbf{A_{d1}} & \mathbf{A_{sup}} &  \mathbf{0} &  \mathbf{0} &  \cdots &  \mathbf{0}  \\
\mathbf{0} &\mathbf{A_{d1}} & \mathbf{A_{sup}} &  \mathbf{0}  &  \cdots & \mathbf{0} \\
 \mathbf{0}  & \mathbf{0} &\mathbf{A_{d1}} & \mathbf{A_{sup}} &  \ddots & \mathbf{0} \\
\vdots & \ddots & \ddots & \ddots & \ddots & \vdots \\
\mathbf{0} & \mathbf{0} & \mathbf{0} &   \mathbf{0}  & \mathbf{A_{d1}} & \mathbf{A_{sup}}  \\
\mathbf{0} & \mathbf{0} &   \cdots &  \mathbf{0}  &  \mathbf{0}  & \mathbf{A_{d2}} \\
\end{array} \right]
\end{equation}  where the diagonal and superdiagonal blocks are  \begin{equation}
	\mathbf{A_{d1}} = \begin{bmatrix} 
	-\lambda-r & \lambda & 0 & \cdots & 0  \\
	0 & \ddots & \lambda & \cdots & 0 \\
	\vdots & & \ddots & \ddots &  \vdots  \\
	0 & 0 & \cdots & \ddots & \lambda \\
	0 & 0 & \cdots & 0 & -\lambda-r  \\  \end{bmatrix}, \qquad \mathbf{A_{sup}}= \begin{bmatrix} 
	\;\;\;\;r\;\;\;\; & 0 & 0 & \cdots & 0  \\
	0 & \ddots & 0 & \cdots & 0 \\
	\vdots & & \ddots & \ddots &  \vdots  \\
	0 & 0 & \cdots & \ddots & 0\\
	0 & 0 & \cdots & 0 & \;\;\;\;r\;\;\;\;  \\  \end{bmatrix} 
\end{equation} and the bottom right diagonal entry is  \begin{equation}
	\mathbf{A_{d2}} = \begin{bmatrix} 
		\;\;-\lambda\;\;\; & \lambda & 0 & \cdots & 0  \\
		0 & \ddots & \lambda & \cdots & 0 \\
		\vdots & & \ddots & \ddots &  \vdots  \\
		0 & 0 & \cdots & \ddots & \lambda \\
		0 & 0 & \cdots & 0 & \;\;\;-\lambda\;\; \\  \end{bmatrix}.
\end{equation}

\begin{figure}[tb!] 
	\includegraphics[width=\textwidth]{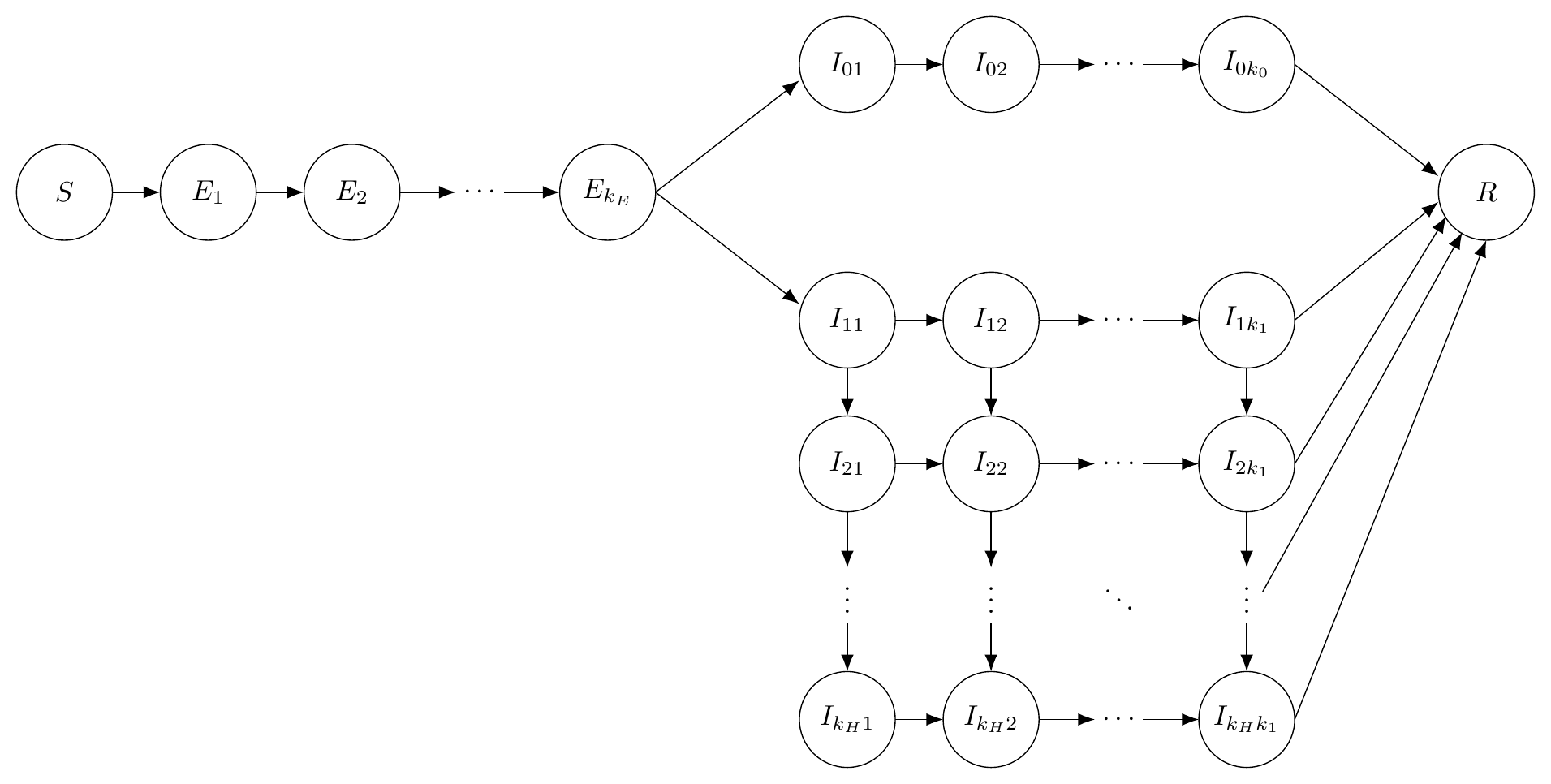}
	\caption{SEIR-type model with heterogeneity in illness severity and hospitalizations that do not alter the infectious period. A special case of eqs. \eqref{eq:SEIRH}  (cf. Fig. \ref{fig:SEIRH2}). See the main text for details. Here the standard LCT has been applied to the exposed (E) state, and the GLCT is applied to the infectious (I) states, using the competing Poisson Process approach \citep{Hurtado2019} to model hospitalizations in a fraction of the infectious individuals. This ensures that the time spent in I is independent of whether or not individuals transition to the hospital. }  \label{fig:SEIRH1}
\end{figure}

Together, the matrix $\mathbf{A_{I_s}}$ and the length $k_H\,k_{I_s}$ initial distribution vector $\boldsymbol{\alpha_{I_s}}=[1, \; 0, \; \cdots ,\; 0]^\text{T}$ complete the parameterization of model eqs. \eqref{eq:SEIRH} to yield the model structure illustrated in Fig. \ref{fig:SEIRH1}. 

Observe that the matrix above has the same diagonal and superdiagonal blocks in all rows except for the last row, for which the diagonal entries are $-\lambda$ and not $-\lambda-r$. Here the dwell time in each sub-state of I$_s$ (except for the last row) follows the minimum of two independent exponential distributions with rates $\lambda$ and $r$, so by the properties of exponential distributions, those dwell times are each exponentially distributed with rate $\lambda + r$. Individuals leaving those sub-states then either move horizontally towards resolving their infections with probability $\lambda/(\lambda+r)$, or move downwards towards hospitalization (the last row) with probability $r/(\lambda+r)$ \citep{Hurtado2019}. In the last row, individuals are hospitalized and can only move horizontally towards the resolution of their illness. This structure ensures that time transition to the hospital (the last row) does not impact their overall distribution of time spent in state I$_1$.

\subsubsection{Case 2: Hospitalization With Heterogeneous Need for Critical Care}

To further illustrate the flexibility of eqs. \eqref{eq:SEIRPT} and \eqref{eq:SEIRH}, we now consider the model structure illustrated in Fig. \ref{fig:SEIRH2}. In this case, we make similar assumptions to the case above, except for the states that pertain to the fraction $\rho$ of individuals who experience severe illness. Those individuals exhibit an Erlang distributed period of more mild disease, with rate $r_1$ and shape parameter $k_1$. Those individuals then either recover (with probability $f$) after an Erlang distributed period of time with rate $r_R$ and shape $k_R$, or they become even more ill (with probability $1-f$) and require hospitalization for an Erlang distributed amount time with rate $r_C$ and shape $k_C$. As above, all individuals eventually enter a \textit{removed} state R which, for our purposes here, makes no distinction between recovery and death. 

\begin{figure}[tb!] 
	\includegraphics[width=\textwidth]{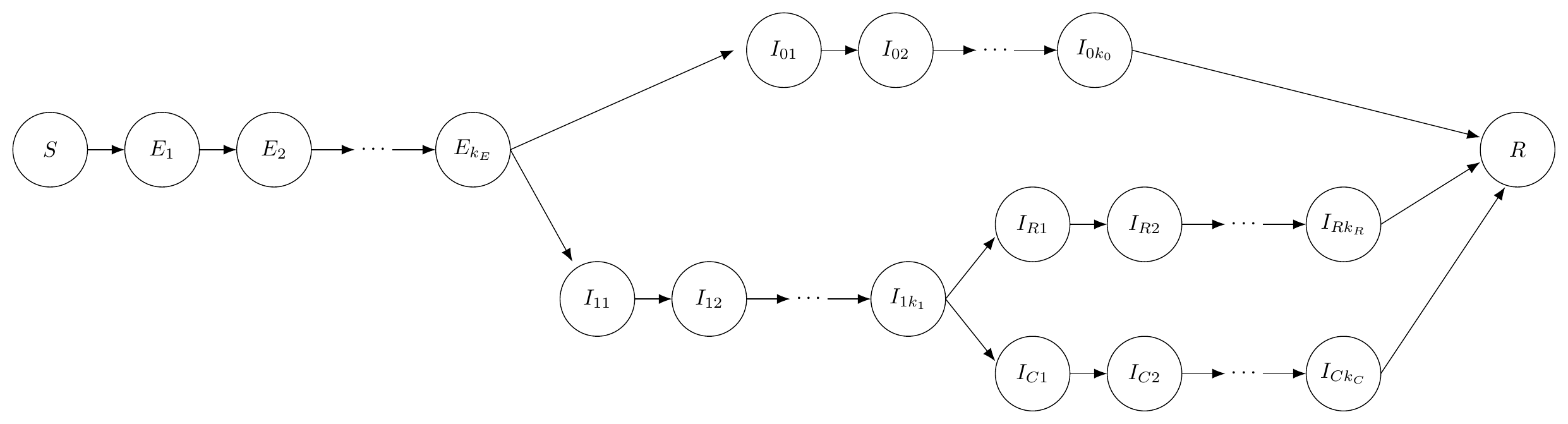}
	\caption{SEIR-type model with heterogeneity in illness severity in which a fraction of infected individuals experience severe illness and a fraction of those require critical care. This is also a special case of eqs. \eqref{eq:SEIRH} (see Fig. \ref{fig:SEIRH1}). See the main text for details.}  \label{fig:SEIRH2}
\end{figure}

Comparing Figs. \ref{fig:SEIRH1} and \ref{fig:SEIRH2}, we see that this second scenario is also a special case of eqs. \eqref{eq:SEIRH}, and only differs from that case in the definition of matrix $\mathbf{A_{I_s}}$ and the length $k_1+k_R+k_C$ initial distribution vector $\boldsymbol{\alpha_{I_s}}=[1, \; 0, \; \cdots ,\; 0]^\text{T}$. Ordering the substates of I$_s$ from I$_{11}$ to I$_{1k_1}$ to I$_{R1}$ to I$_{Rk_R}$ to I$_{C1}$ to I$_{Ck_C}$, then, by the assumptions above, $\mathbf{A_{I_s}}$ is given by \begin{equation}	\mathbf{A_{I_s}}=
\left[\footnotesize  \begin{array}{c|c|c} 
\begin{matrix} 
-r_1 & r_1 & 0 & \cdots & 0  \\
0 & -r_1 & r_1 & \cdots & 0 \\
\vdots & & \ddots & \ddots &  \vdots  \\
0 & 0 & \cdots & r_1 & r_1 \\
0 & 0 & \cdots & 0 & -r_1  \\  \end{matrix} & \begin{matrix} 
0 & 0 & 0 & \cdots & 0  \\
0 & \ddots & 0 & \cdots & 0 \\
\vdots & & \ddots & \ddots &  \vdots  \\
0 & 0 & \cdots & \ddots & 0\\
f\,r_1 & 0 & \cdots & 0 & 0  \\  \end{matrix}  &   \begin{matrix} 
0 & 0 & 0 & \cdots & 0  \\
0 & \ddots & 0 & \cdots & 0 \\
\vdots & & \ddots & \ddots &  \vdots  \\
0 & 0 & \cdots & \ddots & 0\\
(1-f)\,r_1 & 0 & \cdots & 0 & 0  \\  \end{matrix}  \\
\hline 
\mathbf{0} & \begin{matrix} 
-r_R & r_R & 0 & \cdots & 0  \\
0 & -r_R & r_R & \cdots & 0 \\
\vdots & & \ddots & \ddots &  \vdots  \\
0 & 0 & \cdots & -r_R & r_R \\
0 & 0 & \cdots & 0 & -r_R  \\  \end{matrix} &  \mathbf{0}  \\
\hline 
\mathbf{0} & \mathbf{0} & \begin{matrix} 
-r_C & r_C & 0 & \cdots & 0  \\
0 & -r_C & r_C & \cdots & 0 \\
\vdots & & \ddots & \ddots &  \vdots  \\
0 & 0 & \cdots & -r_C & r_C \\
0 & 0 & \cdots & 0 & -r_C  \\  \end{matrix} \\
\end{array} \right].
\end{equation}

The two examples above illustrate the utility of deriving models using the GLCT and by thinking about deriving model structure from first principles using intuition from CTMCs. This approach can be leveraged for the analytical study of such models (Hurtado and Richards, \textit{in prep.}), but as we show in the next section it can also facilitate the process of computing numerical solutions to such systems of ODEs. 

\subsection{Benchmarking Numerical Solutions: LCT vs GLCT}\label{sec:benchmark}

Software like Matlab, Python, Julia, and R, have built-in ODE solvers that implement various numerical methods for obtaining numerical solutions to ODEs. In Fig. \ref{fig:RMbench}, we summarize the average time it takes to compute a numerical solution to the generalized Rosenzweig-MacArthur model with Erlang distributed maturation time and time predators spend in the adult stage, either in the (LCT) form of eqs. \eqref{eq:RMLCT} or in the (GLCT) form of eq. \eqref{eq:RMPT}. In Fig. \ref{fig:SEIRbench}, we make a similar comparison with the SEIR model with Erlang latent and infectious periods comparing the time it takes to compute numerical solutions to that model in the form of eqs. \eqref{eq:SEIRLCT} or the equivalent model in the (GLCT) form of eqs.  \eqref{eq:SEIRPT}. See the figure captions, the R code in Appendix A, and the Electronic Supplements for further computational details.

\begin{figure}[tbh!]
	\includegraphics[width=\textwidth]{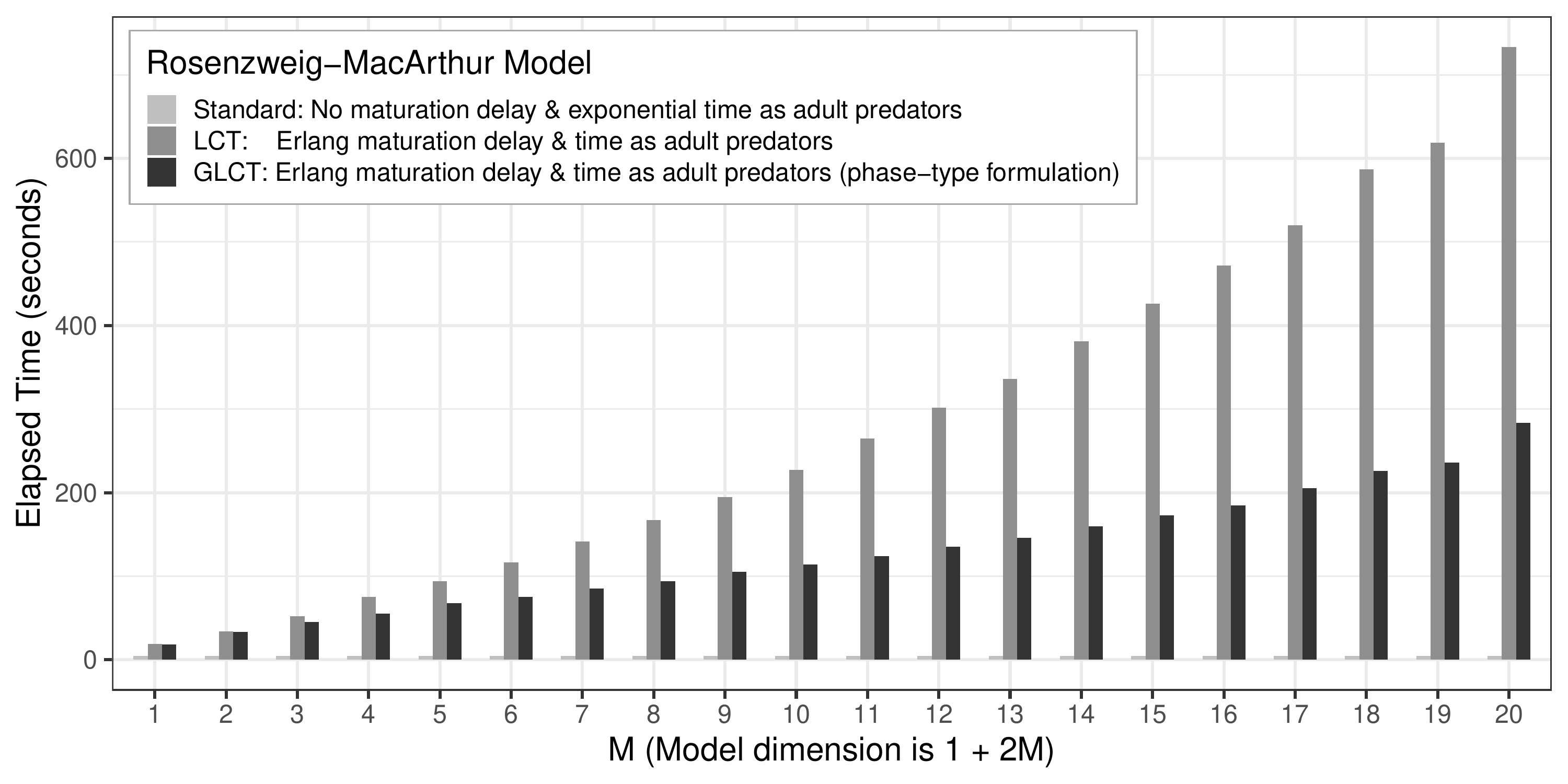}
	\caption{Benchmark results for 140 iterations of computing numerical solutions to the Rosenzweig-MacArthur model with Erlang (gamma) distributed maturation times and time spent in the adult-stage, using either a direct (LCT; medium gray) or more general (GLCT; black) model formulations (the standard Rosenzweig-MacArthur model with no maturation delay and exponentially distributed time spent in the adult stage [light gray; eqs. \eqref{eq:RM}] is included as a baseline). The second and third cases are mathematically equivalent systems. For smaller shape parameters (lower dimension systems) the GLCT model formulation is relatively slower than explicitly writing out the $2M+1$ equations, whereas for larger shape parameters (higher dimension systems) the GLCT formulation becomes markedly faster. This is likely due to the efficiency of the matrix computations. The x-axis values $M$ indicate the number of maturing predator sub-states ($k_x=M$) and adult predator sub-states ($k_y=M$), which yields a $2M+1$ dimensional model. Numerical solutions were computed using the \texttt{ode()} function in the \texttt{deSolve} package \citep{deSolve} in R \citep{R}, using method \texttt{ode45} with \texttt{atol=10\string^-6}, for time points $0$ to $500$ in increments of 1, with $r = 1$, $K = 1000$, $a = 5$, $k = 500$, $\chi = 0.5$, $\mu_x = 0.5$, $\mu_y = 1$, $N(0)=1000$, $x_i(0)=0$ ($i\geq1$), $y_1(0)=10$, and $y_j(0)=0$ ($j>1$). See Appendix \ref{sec:code} for details. } \label{fig:RMbench}
\end{figure}

\begin{figure}[tbh!]
	\includegraphics[width=\textwidth]{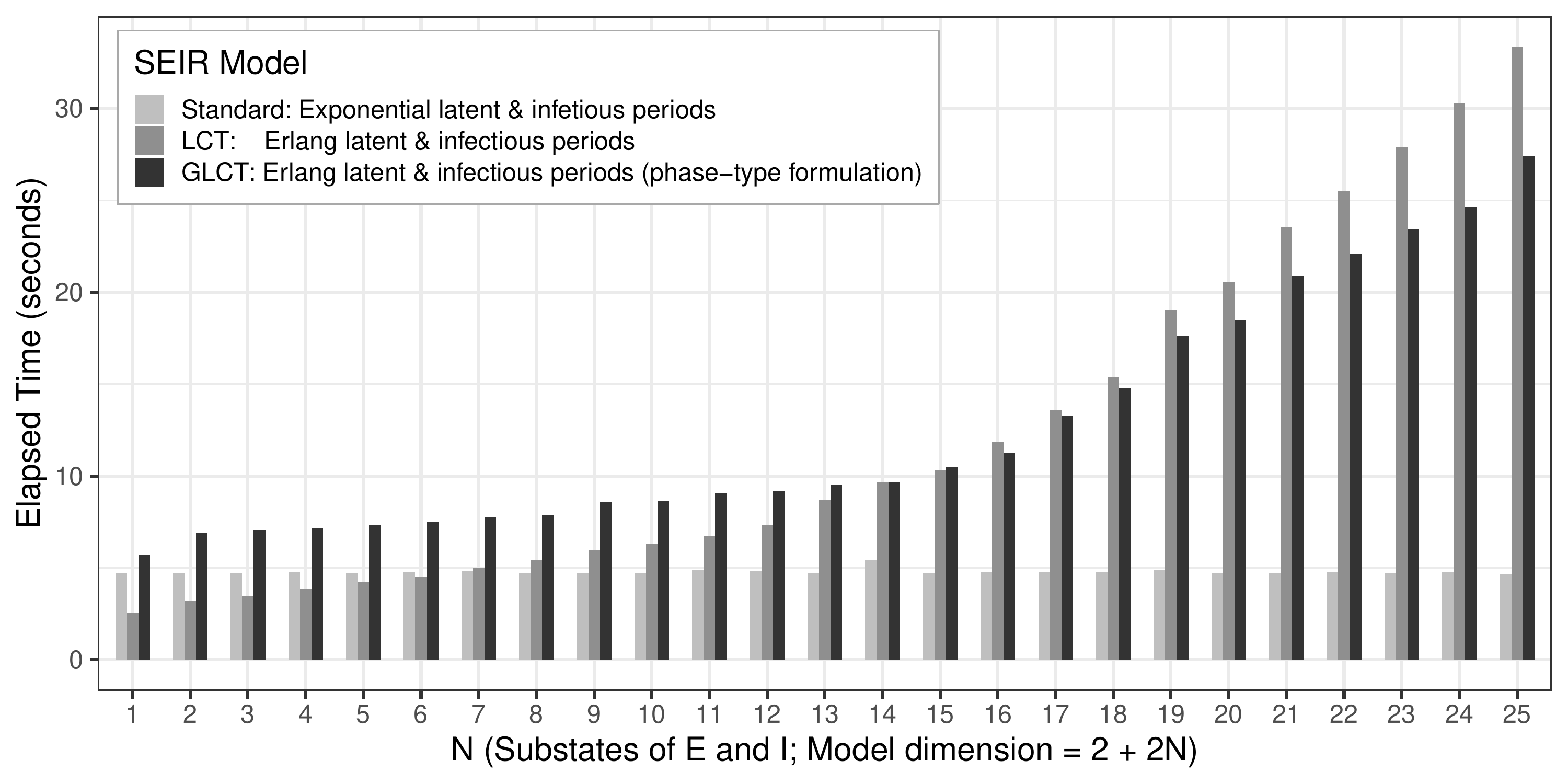}
	\caption{Benchmark results for 500 iterations of computing numerical solutions to the SEIR model with Erlang distributed latent and infectious periods, using either a direct (LCT; medium gray) or more general (GLCT; black) model formulations (SEIR with exponentially distributed latent and infectious periods [light gray; eqs. \eqref{eq:SEIR}] is included as a baseline). The second and third cases are mathematically equivalent systems. For smaller shape parameters (lower dimension systems) the GLCT model formulation is relatively slower than explicitly writing out the 2N+2 equations, whereas for larger shape parameters (higher dimension systems) the GLCT formulation becomes faster, likely due to the efficiency of the matrix computations. The x-axis values $N$ indicate the number of sub-states in each of E ($k_E=N$) and I ($k_I=N$), which yields a $2N+2$ dimensional model. Numerical solutions were computed using the \texttt{ode()} function in the \texttt{deSolve} package \citep{deSolve} in R \citep{R}, using method \texttt{ode45} with \texttt{atol=10\string^-6}, for time points $0$ to $100$ in increments of $0.5$, with $\beta=1$, $\mu_E=4$, $\mu_I=7$, $S(0)=0.9999$, $E_1(0)=0.0001$, $E_i(0)=0$ ($i>1$), $I_1(0)=0$, $I_i(0)=0$ ($i>1$), and $R(0)=0$. See Appendix \ref{sec:code} for details. } \label{fig:SEIRbench}
\end{figure}

To summarize these results, neither approach performs uniformly better than the other. In both comparisons, low dimensional models (i.e., smaller shape parameters and thus larger coefficients of variation) coded using the more  explicit (LCT-based) model formulation, like eqs. \eqref{eq:RMLCT} and \eqref{eq:SEIRLCT}, yielded numerical solutions \textit{faster} than mathematically equivalent models coded using the more general phase-type (GLCT-based) formulation, like eqs. \eqref{eq:RMPT} and \eqref{eq:SEIRPT}.

For higher dimensional models (i.e., larger shape parameters and thus smaller coefficients of variation), the phase-type (GLCT) formulation of these models outperformed their LCT-type counterparts. This is very likely the result of the matrix calculations used in the phase-type (GLCT) formulation of the models being more computationally efficient. 

It is noteworthy that the GLCT-based ODE function only needs to be coded once, as it is agnostic of the number of sub-state variables. In contrast, researchers typically hard-code the number of sub-states for such computations using an LCT-based model, and must write multiple ODE functions to consider model outputs using different shape parameters for the assumed Erlang distributions. 

In summary, the GLCT may allow for faster computing times for high dimensional systems and it can simplify writing code for ODE solvers since a single instance of the model can be used to simulate ODE models with an arbitrary number of dimensions.

\section{Discussion}\label{sec:discussion}

ODE models are widely used in the biological sciences and can often be viewed as mean field models for some (oftentimes, unspecified) stochastic state transition model. Such ODE models are sometimes criticized for their implicit assumption of exponentially distributed dwell times (e.g., exponentially distributed lifetimes of organisms), and their frequent lack of age or stage structure, which may not adequately capture important time lags in the system being modeled, such as the maturation times of organisms or latent periods in disease transmission models. 

In this paper, we have provided an overview of the Generalized Linear Chain Trick (GLCT), a relatively new tool modelers can use to improve upon existing ODE models to address these shortcomings, and we illustrate its utility using multiple examples. The GLCT extends the well-known Linear Chain Trick (LCT) to a broad family of probability distributions known as the phase-type distributions, and also clarifies in a straightforward way how mean field ODEs reflect underlying stochastic model assumptions when viewed from the perspective of continuous time Markov chains (CTMCs). Therefore, we have also provided an overview of CTMCs, and their absorption time distributions in particular. Importantly, the phase-type distributions comprise these absorption time distributions, and include a broad set of named probability distributions including exponential, Erlang, hypoexponential (generalized Erlang), hyperexponential, and Coxian distributions. Freely available statistical tools like BuTools \citep{BuTools2,BuToolswww} exist for fitting phase-type distributions to data, enabling modelers to build approximate empirical dwell time distributions into ODE models using the GLCT.

We have illustrated how to apply the GLCT by using it to derive extensions of two familiar models: the Rosenzweig-MacArthur Predator-Prey model, and the SEIR model of infectious disease transmission. We showed how two special cases of the generalized SEIR model can be constructed to accommodate additional complexity among infected individuals: the first case models a hospitalization scenario in which the transition to the hospitalized state has no impact on the distribution of the overall time individuals spend sick \citep[cf.][]{Feng2016}, and the second case models heterogeneity in the progression and severity of infection outcomes. These examples illustrate how the GLCT can be used to refine model assumptions in a rigorous manner, but without the need to explicitly derive mean field ODEs from stochastic models and/or mean field integral equations.

Lastly, we showed some of the potential computational benefits of using a GLCT-based approach to ODE model formulation by comparing the time it takes to compute numerical solutions of ODEs using the standard approach versus using a GLCT-based approach. We found that, for low dimensional models, the GLCT-based computations are slower than using a more traditional approach; however for higher dimensional models, the GLCT-based computations were faster. This improvement in computing time is likely the result of the computational efficiency of doing matrix and vector based computations. In addition to faster computation times, another benefit of the GLCT-based approach is that only one ODE model function needs to be written since it is agnostic of the model dimension.  In contrast, models that have been extended using the standard LCT typically have the number of sub-states (i.e., the shape parameter for the Erlang distributions) hard-coded, and therefore multiple model functions must be coded to explore different shape parameters. 

In conclusion, we hope this work encourages other researchers to think more carefully about underlying model assumptions when deriving ODE models. We also hope this work demonstrates the relative ease with which some basic intuition from Markov chain theory can be used to specify clear model assumptions from first principles, which can then be very quickly realized as one or more mean field ODE models using the GLCT \citep{Hurtado2019}.

\section*{Acknowledgements}
The authors thank Deena Schmidt, Jillian Kiefer, and the other members Mathematical Biology Lab Group at UNR for conversations and comments that improved this manuscript. 

\section*{Funding}
This work was supported by a grant awarded to PJH by the Sloan Scholars Mentoring Network of the Social Science Research Council with funds provided by the Alfred P. Sloan Foundation; and this material is based upon work supported by the National Science Foundation under Grant No. DEB-1929522. This work was partly motivated by PJH's participation in the ICMA-VII conference held at Arizona State University, October 12-14, 2019, with travel support provided to PJH from NSF grant \#DMS-1917512 awarded to the Organizing Committee of the ICMA-VII conference.

\section*{Disclosure statement}
The authors declare that they have no conflict of interest.

\begin{appendices} 
\numberwithin{equation}{section}
\renewcommand{\theequation}{\Alph{section}\arabic{equation}}\setcounter{equation}{0} 

\section{R Code for Numerical Solutions to ODEs}\label{sec:code}

For the complete R code used to generate Figs. \ref{fig:RMbench} and \ref{fig:SEIRbench}, see the Electronic Supplements. The following R code shows a portion of that code, to illustrate how the GLCT-based model formulations differ from the LCT-based formulations.

\subsection{Rosenzweig-MacArthur Model \& Extensions}
		
\begin{lstlisting}
####################################################################################
## Benchmarking the Rosenzweig-MacArthur model numerical solutions
## ----------------------------------------------------------------
##  Case 1: Standard: No maturation delay, exponentially distributed predator lifetime
##  Case 2: Classic LCT (Erlang maturation time and adult life stage duration), hard-coded
##  Case 3: Phase-type/GLCT implementation of Case 2
####################################################################################

library(deSolve)
library(rbenchmark)

## Parameters

IC = c(N=1000,Y=10); 

parms = c(r = 1, K = 1000, a = 5, k = 500, chi = 0.5, mux = 0.5, muy = 1);

####################################################################################
## Function definitions

# Standard Rosenzweig-MacArthur model 
RM.ode <- function(tm,z,ps) {
r = ps[1]; K = ps[2]; a = ps[3]; k = ps[4]; chi = ps[5]; mux = ps[6]; muy = ps[7];

N=z[1] # prey population
Y=z[2] # predator population

dN = r*N*(1-N/K)-(a/(k+N))*Y*N
dY = chi*(a/(k+N))*N*Y-Y/muy
return(list(c(dN,dY)))
}


# Generalized Rosenzweig-MacArthur with phase-type maturation and adult-life-stage times
RMpt.ode <- function(tm,z,ps) {
r = ps[1]; K = ps[2]; a = ps[3]; k = ps[4]; chi = ps[5]; mux = ps[6]; muy = ps[6];
#kx = ps[['kx']] # Number of substates in E, and...
#ky = ps[['ky']] # ... I.

N = z[1] # prey population
X = z[1+(1:kx)] # immature predator population sub-states
Y = z[1+kx+(1:ky)] # mature predator population sub-states
P = onesyt %*% Y # total predators = sum Y_i

dN = r*N*(1-N/K)- a/(k+N)*P*N
dX = chi*(a/(k+N))*P[1,1]*N * ax + Axt %*% X
dY = ay %*% -onesxt %*% Axt %*% X + Ayt %*% Y

return(list(c(dN,as.numeric(dX),as.numeric(dY))))
}

# Initialize some variables for the phase-type R-M model above
RMpt.init <- function(ps) {
# Unpack some parameter values...
mux=ps[["mux"]] # mean maturation time
muy=ps[["muy"]] # mean time spent in mature life stage
kx <<- ps[['kx']] # Number of substates in E, and...
ky <<- ps[['ky']] # ... I.

# These yield Erlang distributions, where vectors
# ax = (1 0 ... 0) and matrices Ax are as follows...

ax = matrix(0, nrow=kx, ncol=1); ax[1] = 1;
Ax = kx/mux*(diag(rep(-1,kx),kx)); if(kx>1) for(i in 1:(kx-1)) {Ax[i,i+1] = kx/mux}

ay = matrix(0, nrow=kx, ncol=1); ay[1] = 1;
Ay = ky/muy*(diag(rep(-1,ky),ky)); if(ky>1) for(i in 1:(ky-1)) {Ay[i,i+1] = ky/muy}

# Initial conditions
z0=numeric(1+kx+ky) # initialize the state variable vector with 0s
z0[1] <- IC[["N"]]  # 1 in the initial exposed class 
z0[1+kx+1] <- IC[["Y"]]    # susceptibles = (PopSize - 1)/PopSize

# Set some global variables that the RMpt.ode function can access...
ax  <<- ax
Axt <<- t(Ax)
ay  <<- ay
Ayt <<- t(Ay)
onesxt <<- matrix(1,ncol=kx,nrow=1)
onesx <<- matrix(1,nrow=kx,ncol=1)
onesyt <<- matrix(1,ncol=ky,nrow=1)
onesy <<- matrix(1,nrow=ky,ncol=1)
ICs <<- z0
} # end of RMpt.init function

# Example R.-M. with hard-coded Erlang dwell times (standard LCT implementation)
RMlct1.ode <- function(tm,z,ps) {
r = ps[1]; K = ps[2]; a = ps[3]; k = ps[4]; chi = ps[5]; mux = ps[6]; muy = ps[7];
#kx = ps[['kx']] 
#ky = ps[['ky']] 

N = z[1]
P = onesyt %*% z[1+kx+(1:ky)] # total mature predators; sum Y_i

dN  = r*N*(1-N/K) - a/(k+N)*P*N
dX1 = chi*a/(k+N)*P*N - z[2]*kx/mux
dY1 = kx/mux*z[2] - z[3]*ky/muy

return(list(c(dN, dX1, dY1)))
}

RMlct2.ode <- function(tm,z,ps) {
r = ps[1]; K = ps[2]; a = ps[3]; k = ps[4]; chi = ps[5]; mux = ps[6]; muy = ps[7];
#kx = ps[['kx']] 
#ky = ps[['ky']] 

N = z[1]
P = onesyt %*% z[1+kx+(1:ky)] # total mature predators; sum Y_i

dN  = r*N*(1-N/K) - a/(k+N)*P*N
dX1 = chi*a/(k+N)*P*N - z[2]*kx/mux
dX2 = kx/mux*z[2] - z[3]*kx/mux
dY1 = kx/mux*z[3] - z[4]*ky/muy
dY2 = ky/muy*z[4] - z[5]*ky/muy

return(list(c(dN, dX1, dX2, dY1, dY2)))
}

# ... this pattern is repeated until 

RMlct20.ode <- function(tm,z,ps) {
r = ps[1]; K = ps[2]; a = ps[3]; k = ps[4]; chi = ps[5]; mux = ps[6]; muy = ps[7];
#kx = ps[['kx']] 
#ky = ps[['ky']] 

N = z[1]
P = onesyt %*% z[1+kx+(1:ky)] # total mature predators; sum Y_i

dN  = r*N*(1-N/K) - a/(k+N)*P*N
dX1 = chi*a/(k+N)*P*N - z[2]*kx/mux
dX2 = kx/mux*z[2] - z[3]*kx/mux
dX3 = kx/mux*z[3] - z[4]*kx/mux
dX4 = kx/mux*z[4] - z[5]*kx/mux
dX5 = kx/mux*z[5] - z[6]*kx/mux
dX6 = kx/mux*z[6] - z[7]*kx/mux
dX7 = kx/mux*z[7] - z[8]*kx/mux
dX8 = kx/mux*z[8] - z[9]*kx/mux
dX9 = kx/mux*z[9] - z[10]*kx/mux
dX10 = kx/mux*z[10] - z[11]*kx/mux
dX11 = kx/mux*z[11] - z[12]*kx/mux
dX12 = kx/mux*z[12] - z[13]*kx/mux
dX13 = kx/mux*z[13] - z[14]*kx/mux
dX14 = kx/mux*z[14] - z[15]*kx/mux
dX15 = kx/mux*z[15] - z[16]*kx/mux
dX16 = kx/mux*z[16] - z[17]*kx/mux
dX17 = kx/mux*z[17] - z[18]*kx/mux
dX18 = kx/mux*z[18] - z[19]*kx/mux
dX19 = kx/mux*z[19] - z[20]*kx/mux
dX20 = kx/mux*z[20] - z[21]*kx/mux
dY1 = kx/mux*z[21] - z[22]*ky/muy
dY2 = ky/muy*z[22] - z[23]*ky/muy
dY3 = ky/muy*z[23] - z[24]*ky/muy
dY4 = ky/muy*z[24] - z[25]*ky/muy
dY5 = ky/muy*z[25] - z[26]*ky/muy
dY6 = ky/muy*z[26] - z[27]*ky/muy
dY7 = ky/muy*z[27] - z[28]*ky/muy
dY8 = ky/muy*z[28] - z[29]*ky/muy
dY9 = ky/muy*z[29] - z[30]*ky/muy
dY10 = ky/muy*z[30] - z[31]*ky/muy
dY11 = ky/muy*z[31] - z[32]*ky/muy
dY12 = ky/muy*z[32] - z[33]*ky/muy
dY13 = ky/muy*z[33] - z[34]*ky/muy
dY14 = ky/muy*z[34] - z[35]*ky/muy
dY15 = ky/muy*z[35] - z[36]*ky/muy
dY16 = ky/muy*z[36] - z[37]*ky/muy
dY17 = ky/muy*z[37] - z[38]*ky/muy
dY18 = ky/muy*z[38] - z[39]*ky/muy
dY19 = ky/muy*z[39] - z[40]*ky/muy
dY20 = ky/muy*z[40] - z[41]*ky/muy

return(list(c(dN, dX1, dX2, dX3, dX4, dX5, dX6, dX7, dX8, dX9, dX10, dX11, dX12, dX13, dX14, dX15, dX16, dX17, dX18, dX19, dX20, dY1, dY2, dY3, dY4, dY5, dY6, dY7, dY8, dY9, dY10, dY11, dY12, dY13, dY14, dY15, dY16, dY17, dY18, dY19, dY20)))
}

# Set some of the ode() parameters for numerical solutions
mthd = "ode45"
atol= 1e-6
Tmax = 500
tms=seq(0,Tmax,length=500)


reps<-140

parms1 <- parms; parms1['kx']=1; parms1['ky']=1; RMpt.init(parms1); 
b1=benchmark(RM.1 ={ode(y=IC, times = tms, func = RM.ode, parms = parms, method = mthd, atol=atol)}, 
RMlct.1={ode(y=ICs, times=tms, func=RMlct1.ode, parms = parms1, method = mthd, atol=atol)},
RMpt.1 ={ode(y=ICs, times=tms, func=RMpt.ode,   parms = parms1, method = mthd, atol=atol)}, 
replications = reps)

parms2 <- parms; parms2['kx']=2; parms2['ky']=2; parms2; RMpt.init(parms2);
b2=benchmark(RM.2 ={ode(y=IC, times = tms, func = RM.ode, parms = parms, method = mthd, atol=atol)}, 
RMlct.2={ode(y=ICs, times=tms, func=RMlct2.ode, parms = parms2, method = mthd, atol=atol)},
RMpt.2 ={ode(y=ICs, times=tms, func=RMpt.ode,   parms = parms2, method = mthd, atol=atol)}, 
replications = reps)

# ... intermediate values omitted for breveity -- see electronic supplement for full code...

parms20 <- parms; parms20['kx']=20; parms20['ky']=20; parms20; RMpt.init(parms20);
b20=benchmark(RM.20 ={ode(y=IC, times = tms, func = RM.ode, parms = parms, method = mthd, atol=atol)}, 
RMlct.20={ode(y=ICs, times=tms, func=RMlct20.ode, parms = parms20, method = mthd, atol=atol)},
RMpt.20 ={ode(y=ICs, times=tms, func=RMpt.ode, parms = parms20, method = mthd, atol=atol)}, 
replications = reps) 

# Reorganize the data and change up some labeling...
# 1. Rearrange rows to the right order, save a new copy...

out <- rbind(b1,b2,b20)

# Display output
out


\end{lstlisting}	

\clearpage
\subsection{SEIR Model \& Extensions}

\begin{lstlisting}
####################################################################################
## Benchmarking SEIR model numerical solutions
## ----------------------------------------------
##  Case 1: Exponential dwell time distributions (dimension = 3 if R omitted)
##  Case 2: Classic LCT (Erlang dwell times), hard-coded number of substates
##  Case 3: SEIR w/ phase-type distributions, parameterized w/ Erlang distributions
####################################################################################
library(deSolve)
library(rbenchmark)

## Parameterization....
reps<-500
parms = c(b=1, muE=4, muI=7, cvE=1, cvI=1, kE=1, kI=1) 
parms2 = c(b=1, muE=4, cvE=1/sqrt(2+1e-10), muI=7, cvI=1/sqrt(2+1e-10), kE=2, kI=2)
parms2

parms3 = c(b=1, muE=4, cvE=1/sqrt(3.01), muI=7, cvI=1/sqrt(3.01)) 
parms3['kE'] <- floor((1/parms[['cvE']])^2)
parms3['kI'] <- floor((1/parms[['cvI']])^2)
parms3

####################################################################################
## Function definitions

# SEIR with Exponential dwell times
SEIR.ode <- function(tm,z,ps) {
	b=ps[["b"]]     # beta
	muE=ps[["muE"]] # mean time spent in E
	muI=ps[["muI"]] # mean time in I
	
	dS = -b*z[1]*z[3]
	dE =  b*z[1]*z[3] - z[2]/muE
	dI =  z[2]/muE - z[3]/muI
	dR = z[3]/muI
	
	return(list(c(dS, dE, dI, dR)))
}

# SEIR with hard-coded Erlang dwell times
SEIRlct1.ode <- function(tm,z,ps) {
	b=ps[["b"]]     # beta
	muE=ps[["muE"]] # mean time spent in E
	muI=ps[["muI"]] # mean time in I
	kE =1
	kI =1
	
	Itot = sum(z[1+kE+(1:kI)]) # sum(Ivec) 
	
	dS  = -b*z[1]*Itot
	dE1 =  b*z[1]*Itot  - z[2]*kE/muE
	dI1 =  z[2]*kE/muE  - z[3]*kI/muI
	dR  =  z[3]*kI/muI
	
	return(list(c(dS, dE1, dI1, dR)))
}

SEIRlct2.ode <- function(tm,z,ps) {
	b=ps[["b"]]     # beta
	muE=ps[["muE"]] # mean time spent in E
	muI=ps[["muI"]] # mean time in I
	kE =2
	kI =2
	
	Itot = sum(z[1+kE+(1:kI)]) # sum(Ivec) 
	
	dS  = -b*z[1]*Itot
	dE1 =  b*z[1]*Itot  - z[2]*kE/muE
	dE2 =  z[2]*kE/muE  - z[3]*kE/muE
	dI1 =  z[3]*kE/muE  - z[4]*kI/muI
	dI2 =  z[4]*kI/muI  - z[5]*kI/muI
	dR  =  z[5]*kI/muI
	
	return(list(c(dS, dE1, dE2, dI1, dI2, dR)))
}

## This pattern is continued until...

SEIRlct25.ode <- function(tm,z,ps) {
	b=ps[["b"]]     # beta
	muE=ps[["muE"]] # mean time spent in E
	muI=ps[["muI"]] # mean time in I
	kE = 25
	kI = 25
	
	Itot = sum(z[1+kE+(1:kI)]) # sum(Ivec) 
	
	dS  = -b*z[1]*Itot
	dE1 =  b*z[1]*Itot - z[2]*kE/muE
	dE2 =  z[2]*kE/muE  - z[3]*kE/muE
	dE3 =  z[3]*kE/muE  - z[4]*kE/muE
	dE4 =  z[4]*kE/muE  - z[5]*kE/muE
	dE5 =  z[5]*kE/muE  - z[6]*kE/muE
	dE6 =  z[6]*kE/muE  - z[7]*kE/muE
	dE7 =  z[7]*kE/muE  - z[8]*kE/muE
	dE8 =  z[8]*kE/muE  - z[9]*kE/muE
	dE9 =  z[9]*kE/muE  - z[10]*kE/muE
	dE10 =  z[10]*kE/muE  - z[11]*kE/muE
	dE11 =  z[11]*kE/muE  - z[12]*kE/muE
	dE12 =  z[12]*kE/muE  - z[13]*kE/muE
	dE13 =  z[13]*kE/muE  - z[14]*kE/muE
	dE14 =  z[14]*kE/muE  - z[15]*kE/muE
	dE15 =  z[15]*kE/muE  - z[16]*kE/muE
	dE16 =  z[16]*kE/muE  - z[17]*kE/muE
	dE17 =  z[17]*kE/muE  - z[18]*kE/muE
	dE18 =  z[18]*kE/muE  - z[19]*kE/muE
	dE19 =  z[19]*kE/muE  - z[20]*kE/muE
	dE20 =  z[20]*kE/muE  - z[21]*kE/muE
	dE21 =  z[21]*kE/muE  - z[22]*kE/muE
	dE22 =  z[22]*kE/muE  - z[23]*kE/muE
	dE23 =  z[23]*kE/muE  - z[24]*kE/muE
	dE24 =  z[24]*kE/muE  - z[25]*kE/muE
	dE25 =  z[25]*kE/muE  - z[26]*kE/muE
	dI1 =  z[26]*kE/muE  - z[27]*kI/muI
	dI2 =  z[27]*kI/muI  - z[28]*kI/muI
	dI3 =  z[28]*kI/muI  - z[29]*kI/muI
	dI4 =  z[29]*kI/muI  - z[30]*kI/muI
	dI5 =  z[30]*kI/muI  - z[31]*kI/muI
	dI6 =  z[31]*kI/muI  - z[32]*kI/muI
	dI7 =  z[32]*kI/muI  - z[33]*kI/muI
	dI8 =  z[33]*kI/muI  - z[34]*kI/muI
	dI9 =  z[34]*kI/muI  - z[35]*kI/muI
	dI10 =  z[35]*kI/muI  - z[36]*kI/muI
	dI11 =  z[36]*kI/muI  - z[37]*kI/muI
	dI12 =  z[37]*kI/muI  - z[38]*kI/muI
	dI13 =  z[38]*kI/muI  - z[39]*kI/muI
	dI14 =  z[39]*kI/muI  - z[40]*kI/muI
	dI15 =  z[40]*kI/muI  - z[41]*kI/muI
	dI16 =  z[41]*kI/muI  - z[42]*kI/muI
	dI17 =  z[42]*kI/muI  - z[43]*kI/muI
	dI18 =  z[43]*kI/muI  - z[44]*kI/muI
	dI19 =  z[44]*kI/muI  - z[45]*kI/muI
	dI20 =  z[45]*kI/muI  - z[46]*kI/muI
	dI21 =  z[46]*kI/muI  - z[47]*kI/muI
	dI22 =  z[47]*kI/muI  - z[48]*kI/muI
	dI23 =  z[48]*kI/muI  - z[49]*kI/muI
	dI24 =  z[49]*kI/muI  - z[50]*kI/muI
	dI25 =  z[50]*kI/muI  - z[51]*kI/muI
	dR  =  z[51]*kI/muI
	
	return(list(c(dS, dE1, dE2, dE3, dE4, dE5, dE6, dE7, dE8, dE9, dE10, dE11, dE12, dE13, dE14, dE15, dE16, dE17, dE18, dE19, dE20, dE21, dE22, dE23, dE24, dE25, dI1, dI2, dI3, dI4, dI5, dI6, dI7, dI8, dI9, dI10, dI11, dI12, dI13, dI14, dI15, dI16, dI17, dI18, dI19, dI20, dI21, dI22, dI23, dI24, dI25, dR)))
}

# SEIR with Erlang dwell times in a GLCT framework
SEIRpt.ode <- function(tm,z,ps) {
	b=ps[["b"]]     # beta
	muE=ps[["muE"]] # mean time spent in E
	muI=ps[["muI"]] # mean time in I
	kE =ps[['kE']]
	kI =ps[['kI']]
	Itot = sum(z[1+kE+(1:kI)]) # sum(Ivec) 
	dS = -b*z[1]*Itot
	dE =  b*z[1]*Itot                        *  aE + AEt %*% z[1+(1:kE)]  # t(AE) %*% Evec
	dI =  aI %*% (-OnesE%*%AEt %*% z[1+(1:kE)]) + AIt %*% z[1+kE+(1:kI)] #  t(AI) %*% Ivec
	dR = as.numeric(-OnesI%*%AIt %*% z[1+kE+(1:kI)]) 
	
	return(list(c(dS, as.numeric(dE), as.numeric(dI), dR)))
} 

# Function to initialize some objects used by SEIRpt.ode()
SEIRpt.init <- function(ps) {
	# Unpack some parameter values...
	muE=ps[["muE"]] # mean time spent in E
	muI=ps[["muI"]] # mean time in I
	kE = ps[['kE']] # Number of substates in E, and...
	kI = ps[['kI']] # ... I.
	
	# These are Erlang distributions framed in a Phase-type distribution context,
	# where vector a = (1 0 ... 0) and matrix A is as follows...
	
	aE = matrix(0,nrow = kE, ncol=1); aE[1] = 1;
	AE = kE/muE*(diag(rep(-1,kE),kE)); if(kE>1) for(i in 1:(kE-1)) {AE[i,i+1] = kE/muE}
	
	aI = matrix(0,nrow = kI, ncol=1); aI[1] = 1;
	AI = kI/muI*(diag(rep(-1,kI),kI)); if(kI>1) for(i in 1:(kI-1)) {AI[i,i+1] = kI/muI}
	
	# Initial conditions
	PopSize=10000
	z0=numeric(kE+kI+2) # initialize the 1+kE+kI+1 state variables
	z0[2] <- 1/PopSize  # 1 in the initial exposed class 
	z0[1] <- 1-z0[2]    # susceptibles = (PopSize - 1)/PopSize
	
	# Set some global variables...
	aE  <<- aE
	AEt <<- t(AE)
	aI  <<- aI
	AIt <<- t(AI)
	OnesE <<- matrix(1,ncol=kE,nrow=1)
	OnesI <<- matrix(1,ncol=kI,nrow=1)
	ICs <<- z0
}

############################################################################
## Example plot
library(ggplot2)

Tmax = 100
tms=seq(0,Tmax,length=200)
SEIRpt.init(parms) # set's some initial conditions in ICs, and some matrices needed for SEIRpt.ode()

out = ode(y=ICs, times = tms, func = SEIRpt.ode, parms = parms, method = "ode23");
c(Pos=sum(out[,-1]>=0), Neg=sum(out[,-1]<0) )

matplot(tms, out[,-1],type="l",lty=1)

out = ode(y=ICs, times = tms, func = SEIRpt.ode, parms = parms, method = "ode45");
c(Pos=sum(out[,-1]>=0), Neg=sum(out[,-1]<0) )

matplot(tms, out[,-1],type="l",lty=2,add=TRUE)

out = ode(y=ICs, times = tms, func = SEIRpt.ode, parms = parms, method = "lsodes");
c(Pos=sum(out[,-1]>=0), Neg=sum(out[,-1]<0) )

matplot(tms, out[,-1],type="l",lty=2,add=TRUE)

out = ode(y=ICs, times = tms, func = SEIRpt.ode, parms = parms, method = "vode");
c(Pos=sum(out[,-1]>=0), Neg=sum(out[,-1]<0) )

matplot(tms, out[,-1],type="l",lty=2,add=TRUE)


############################################################################
## Benchmark...

SEIR.ode <- compiler::cmpfun(SEIR.ode)
SEIRpt.ode <- compiler::cmpfun(SEIRpt.ode)
SEIRlct1.ode <- compiler::cmpfun(SEIRlct1.ode)
SEIRlct2.ode <- compiler::cmpfun(SEIRlct2.ode)
SEIRlct25.ode <- compiler::cmpfun(SEIRlct25.ode)

mthd = "ode45";  atol= 1e-6
IC=c(S=0.9999, E=0.0001, I=0, R=0) # for SEIR.ode()

parms1 <- parms; parms1['kE']=1; parms1['kI']=1; parms1; SEIRpt.init(parms1)
b1=benchmark(SEIR.1 ={ode(y=IC, times = tms, func = SEIR.ode, parms = parms, method = mthd, atol=atol)}, 
SEIRlct.1={ode(y=ICs, times=tms, func=SEIRlct1.ode, parms = parms1, method = mthd, atol=atol)},
SEIRpt.1 ={ode(y=ICs, times=tms, func=SEIRpt.ode,   parms = parms1, method = mthd, atol=atol)}, 
replications = reps)

parms2 <- parms; parms2['kE']=2; parms2['kI']=2; parms2; SEIRpt.init(parms2)
b2=benchmark(SEIR.2 ={ode(y=IC, times = tms, func = SEIR.ode, parms = parms, method = mthd, atol=atol)}, 
SEIRlct.2={ode(y=ICs, times=tms, func=SEIRlct2.ode, parms = parms2, method = mthd, atol=atol)},
SEIRpt.2 ={ode(y=ICs, times=tms, func=SEIRpt.ode,   parms = parms2, method = mthd, atol=atol)}, 
replications = reps)

parms25 <- parms; parms25['kE']=25; parms25['kI']=25; parms25; SEIRpt.init(parms25)
b25=benchmark(SEIR.25 ={ode(y=IC, times = tms, func = SEIR.ode, parms = parms, method = mthd, atol=atol)}, 
SEIRlct.25={ode(y=ICs, times=tms, func=SEIRlct25.ode, parms = parms25, method = mthd, atol=atol)},
SEIRpt.25 ={ode(y=ICs, times=tms, func=SEIRpt.ode,   parms = parms25, method = mthd, atol=atol)}, 
replications = reps)

b1; b2; b25

# First, reorganize the data and change some labeling...
# 1. Rearrange rows to the right order, save a new copy...
out <- rbind(b1,b2,b25)

out$N <- stringr::str_split_fixed(out$test,'\\.',2)[,2]
out$Model <- stringr::str_split_fixed(out$test,'\\.',2)[,1]
# out$Model <- gsub("SEIR$","SEIR (Exponential)",out$Model)
# out$Model <- gsub("lct"," (Erlang / LCT)",out$Model)
# out$Model <- gsub("pt"," (Erlang as Phase-Type / GLCT)",out$Model)
out$Model <- gsub("SEIR$","Standard: Exponential latent & infetious periods",out$Model)
out$Model <- gsub("lct","LCT:    Erlang latent & infectious periods",out$Model)
out$Model <- gsub("pt","GLCT: Erlang latent & infectious periods (phase-type formulation)",out$Model)
out$Model <- gsub("SEIR",'',out$Model)
out$Model <- factor(out$Model, levels = unique(out$Model) )
out$N <- as.numeric(out$N)
out
\end{lstlisting}

\end{appendices}

\printbibliography

\end{document}